\tikzset{%
  >={Latex[width=2mm,length=2mm]},
  % Specifications for style of nodes:
            base/.style = {rectangle, rounded corners, draw=black,
                           minimum width=3.5cm, minimum height=1cm,
                           text centered, font=\sffamily},
  paper/.style = {base, fill=blue!30},
}
\newcommand{\npat}{M} % number of patterns in data
\newcommand{\data}{data} % dataset
\newtheorem{thm}{Theorem}
\newtheorem{crl}{Corollary}
\newcommand{\ket}[1]{{\left\vert{#1}\right\rangle}}
\date{March 2020}
\begin{document}
\title{Circuit-based quantum random access memory for classical data with continuous amplitudes\thanks{© 2020 IEEE. Personal use of this material is permitted. Permission from IEEE must be
obtained for all other uses, in any current or future media, including
reprinting/republishing this material for advertising or promotional purposes, creating new
collective works, for resale or redistribution to servers or lists, or reuse of any copyrighted
component of this work in other works.
}}

\author{Tiago M. L. de Veras,
        Ismael C. S. de Araujo,
        Daniel K. Park,
        Adenilton J. da Silva,% 
\IEEEcompsocitemizethanks{\IEEEcompsocthanksitem T.~M.~L.~de~Veras is with Centro de Inform\'{a}tica, Universidade Federal de Pernambuco and Departamento de Matem\'{a}tica, Universidade Federal Rural de Pernambuco, Recife, Pernambuco, Brazil. 
\IEEEcompsocthanksitem I.C.S.~de~Araujo is with Centro de Inform\'{a}tica, Universidade Federal de Pernambuco and the Departamento de Computa\c{c}\~{a}o, Universidade Federal Rural de Pernambuco, Recife, Pernambuco, Brazil.
\IEEEcompsocthanksitem D.~K. Park is with School of Electrical Engineering, Korea Advanced Institute of Science and Technology, Daejeon, Korea.
\IEEEcompsocthanksitem A.J.~da~Silva is with Centro de Inform\'{a}tica, Universidade Federal de Pernambuco, Recife, Pernambuco, Brazil. \protect\\ E-mail: ajsilva@cin.ufpe.br}% <-this % stops an unwanted space
\thanks{}}%

\IEEEtitleabstractindextext{%
\begin{abstract}
Loading data in a quantum device is required in several quantum computing applications. Without an efficient loading procedure, the cost to initialize the algorithms can dominate the overall computational cost. 
A circuit-based quantum random access memory named FF-QRAM can load $M$ $n$-bit patterns with computational cost $O(CMn)$ to load continuous data where $C$ depends on the data distribution. In this work, we propose a strategy to load continuous data without post-selection with computational cost $O(Mn$). 
The proposed method is based on the probabilistic quantum memory, a strategy to load binary data in quantum devices, and the FF-QRAM using standard quantum gates, and is suitable for noisy intermediate-scale quantum computers.
\end{abstract}

\begin{IEEEkeywords}
Quantum RAM, Quantum state initialization, Data loading in quantum devices
\end{IEEEkeywords}
}
\maketitle

\section{Introduction}
\IEEEPARstart{Q}{uantum} computation~\cite{benioff1980computer,feynman1982simulating} has the potential to speed up the solution of certain computational problems. These speedups are due to the inherent properties of quantum mechanics, such as superposition, entanglement, and interference. The power of quantum computation over its classical counterpart has been theoretically demonstrated in several problems, such as simulating quantum systems~\cite{feynman1982simulating,10.2307/2899535}, unstructured data search~\cite{grover1997quantum}, prime factorization~\cite{shor1999polynomial}, machine learning~\cite{rebentrost2014quantum}. However, the development of full-fledged quantum hardware capable of operating efficiently on these problems remains unsolved.

A desideratum for practical and wide application of quantum algorithms is the efficient means to load and update classical data in a quantum computer~\cite{biamonte2017quantum}. In other words, a programmer needs to be able to encode the input data structured as
\begin{equation}
\label{eq:data}
   \mathcal{D}=\left\lbrace (x_k,p_k)| x_k\in \mathbb{C}, \sum_{k}|x_k|^2=1, p_k\in\lbrace 0,1\rbrace^{n}\right\rbrace,
\end{equation}
 where $0\le k < M$, into a quantum state efficiently. In addition, $p_k[j]$ denotes $j$th bit of a pattern $p_k$, with $0\le j < n$. 
 
The classical data can be represented as a quantum state
\begin{equation}
\label{eq:amp_enc}
    |\psi\rangle = \sum_{k=0}^{M-1} x_k|p_k\rangle
\end{equation}
using $n$ qubits. When $n=\log_2(M)$, this is equivalent to amplitude encoding~\cite{long2001efficient}, which achieves exponential data compression.
Hereinafter, we refer to the data representation in the form of Eq.~\eqref{eq:amp_enc} as generalized amplitude encoding (GAE). The resource overhead, such as the number of qubits and the circuit depth, for preparing the quantum data can dominate the overall computational cost due to the quantum measurement postulate --- one often needs to repeat the same algorithm multiple times to gather measurement statistics while each measurement destroys the quantum state. Moreover, the state initialization must be carried out substantially faster than the decoherence time. Therefore, a fast algorithm for initializing the quantum data is of critical importance, and this is the problem that we address in this manuscript. The data loading procedure should be designed with quantum circuit elements since the circuit model provides systematic and efficient instructions to achieve universal quantum computation.

The initialization of quantum data can be achieved by utilizing a quantum random access memory (QRAM). The bucket brigade model for QRAM introduced in Ref~\cite{Giovannetti_2008} uses $O(M)$ quantum hardware components and $O(\log_2^2(M))$ time steps to return binary data stored in $M$ memory cells in superposition. This procedure can be expressed as
\begin{equation}
   \frac{1}{\sqrt{M}}\sum_{k=0}^{M-1}|k\rangle |0\rangle \xrightarrow{QRAM} \frac{1}{\sqrt{M}} \sum_{k=0}^{M-1}|k\rangle |p_k\rangle.
\end{equation}
Unfortunately, translating this model to the quantum circuit model while retaining its advantage is difficult~\cite{Arunachalam_2015,8962352}. Moreover, the full state initialization requires additional steps. First, controlled rotations are applied to prepare
\begin{equation}
\label{eq:pre-postselect}
   \frac{1}{\sqrt{M}}\sum_{k=0}^{M-1}|k\rangle |p_k\rangle\left(\sqrt{1-|x_k|^2}|0\rangle+ x_k|1\rangle\right).
\end{equation}
Then by uncomputing $|p_k\rangle$ and post-selecting the last register onto $|1\rangle$ the amplitude encoding is completed~\cite{zhao2018smooth}. Similarly, the GAE can be achieved by uncomputing $|k\rangle$ instead.

Ref.~\cite{park2019circuit} introduced a QRAM architecture that is constructed with the standard quantum circuit elements to provide flexibility and compatibility with existing quantum computing techniques. This model registers the classical data structure $\mathcal{D}$ with $M$ patterns into quantum format by repeating the execution of a quantum circuit consisting of $O(n)$ qubits and $O(Mn)$ flip-register-flop operations $C$ times, and hence called flip-flop QRAM (FF-QRAM). The repetition is necessary to post-select the correct outcome for encoding continuous data (similar to explanations around Eq.~\eqref{eq:pre-postselect}. When the FF-QRAM was first introduced, the extra cost for post-selection was neglected. However, $C$ can be a function of $M$ and $n$, depending on the given classical data. Therefore, designing a QRAM for general classical data without the post-selection procedure, or with a minimal number of repetitions, remains a critical open problem. On the other hand, when storing a set of binary patterns into a quantum state with equal probability amplitudes, the post-selection is not necessary. The probabilistic quantum memory (PQM)~\cite{trugenberger2001probabilistic} is a viable method to achieve this. Both FF-QRAM and PQM are not fully satisfactory as the former requires expensive post-selection, and the latter can only encode binary data.

The main contribution of this work is twofold. First, we show that the number of repetitions for post-selection in the FF-QRAM algorithm can be reduced by using the PQM as a preliminary step of the FF-QRAM and preprocessing the classical data. Second, we present a deterministic quantum data preparation algorithm based on FF-QRAM and PQM that achieves the generalized amplitude encoding without post-selection.

The rest of this manuscript is structured as follows. Section \ref{sec:quantumoperators} gives a brief description of the main quantum gates used in this work. Section \ref{sec:storing_binary_patterns} provides a brief review on  FF-QRAM~\cite{park2019circuit} and PQM~\cite{trugenberger2001probabilistic}, the previous works by which the current work is inspired. Section \ref{sec:limitations_FFQRAM} shows that in the worst-case FF-QRAM has an exponential computational cost due to post-selection. Section \ref{ImproFF-QRAMandPQM}  presents the main results: methods to improve the post-selection probability and to remove the need for post-selection. Section \ref{sec:experiments} presents proof-of-principle experiments to demonstrate the improvement achieved by our methods, and Section \ref{sec:conclusion} draws the conclusion.

\section{Quantum Operators}
\label{sec:quantumoperators}
This section is dedicated to define unitary operators that will appear in this manuscript besides the well-known gates, such as $I$, $X$ and $R_y(\theta)$~\cite{nielsen2002quantum}. 

For controlled unitary gates, the control and the target will be indicated by a subscript, separated by a comma, in which the control qubit appears first. Given a quantum state $\ket{\psi}$ containing bits $x_0$ and $x_1$, we use
\begin{equation}\label{gatecnot}
CX_{x_0,x_1}\ket{\psi}
\end{equation}
to indicate a controlled-$X$ operation, which transforms the target $\ket{x_1}$ to $\ket{x_1\oplus x_0}$.

Equation~\eqref{gatecnot} can be generalized to express a gate that is controlled by multiple qubits. Given a quantum state $\ket{\psi}$, containing $n+1$ bits $x_{0},x_{1},\ldots ,x_{n-1},x_{n}$, we use
\begin{equation}\label{gatencnot}
C^{n}X_{(x_{0}x_1\ldots x_{n-1}),x_{n}}\ket{\psi}
\end{equation}
to denote an $n$-qubit controlled-$X$ operation, which transforms the target $\ket{x_n}$ to $\ket{x_n\oplus x_0\cdot x_1\cdot\ldots\cdot x_{n-1}}$.

This work also uses single-qubit rotations controlled by $n$ qubits, denoted by $C^{n}R_a(\theta)$, where $a$ is one of the axes $x$, $y$ or $z$. Note that a multi-qubit controlled unitary gate can be decomposed to $n-1$ Toffoli gates and a single-qubit controlled unitary gate using $n-1$ ancillae~\cite{nielsen2002quantum}. 

Similarly, a bit-flip operator controlled by a classical bit can be defined as 
\begin{equation}\label{eq:classicalcnot}
cX_{x_0,x_1}
\end{equation}
which applies $X$ to $|x_1\rangle$ if $ x_0 =1$. The FF-QRAM algorithm uses a bit-flip operator that flips the target qubit if the classical control bit is $0$. In this case, we denote the operator with an overline as $\bar{c}X_{x_0,x_1}$.

Another important operator in this work is a gate controlled by classical and quantum states as follows. Given a classical bit $p$, this operation works as
\begin{itemize}
    \item If $p=0$, do $X\ket{m}$.
    \item If $p=1$, do $CX_{u_{2},m}|u_2 m\rangle$,
\end{itemize}
and its quantum circuit representation is depicted in Figure~\ref{circuitocontroleclassico}.
\begin{figure}[!htb]
 \centering \includegraphics[width=.10\textwidth]{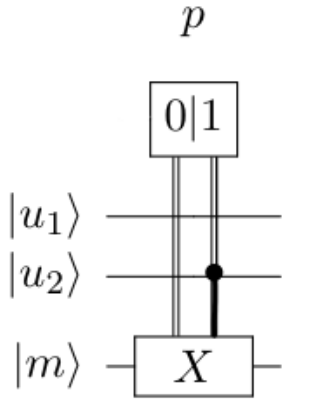}
\caption{A quantum circuit representation of the classical-quantum controlled gate that applies $X$ to $|m\rangle$ if $p=0$, and applies $CX_{u_{2},m}$ if $p=1$.}
\label{circuitocontroleclassico}
\end{figure}

The PQM algorithm uses a controlled unitary operator denoted by $CS^r_{x_0,x_1}$, which applies
 
\begin{equation}
 S^r = \begin{bmatrix}\label{matrixCS}
\sqrt{\frac{r-1}{r}}  & \frac{1}{\sqrt{r}} \\
\frac{-1}{\sqrt{r}} & \sqrt{\frac{r-1}{r}} \\
\end{bmatrix},
\end{equation}
where $r\in \mathbb{N}$, to the target $x_1$ if $x_0=1$.

Now, given a single-qubit unitary matrix

\begin{equation}\label{U3gammarexp}
U_{3}= \begin{bmatrix}
\ \ \ 
\cos{\frac{\theta}{2}} & \ \ -e^{i\lambda}\sin{\frac{\theta}{2}}  \\
\\
e^{i\phi}\sin{\frac{\theta}{2}}    &  e^{i(\lambda+\phi)}\cos{\frac{\theta}{2}}
\end{bmatrix},
\end{equation}
it is possible to obtain $\theta$, $\lambda$ and $\phi$, such that $U_3$  can be rewritten as a matrix with two parameters, $x_k$ and $\gamma_k$, as follows

\begin{equation}
\label{eq:matrixCU3}
U_{3}^{(x_k,\gamma_k)} = \begin{bmatrix}
 \sqrt{\frac{\gamma_{k}-|x_{k}|^2}{\gamma_{k}}}  & \frac{x_{k}}{\sqrt{\gamma_{k}}} \\
 \frac{-x_{k}^{*}}{\sqrt{\gamma_{k}}} & \sqrt{\frac{\gamma_{k}-|x_{k}|^2}{\gamma_{k}}}\end{bmatrix}.
\end{equation}
To obtain \eqref{eq:matrixCU3}, it is necessary that:

\begin{enumerate}[(i)]
    \item $\cos{\frac{\theta}{2}} = \sqrt{\frac{\gamma_{k}-|x_{k}|^2}{\gamma_{k}}}  \iff \theta =2\arccos{\sqrt{\frac{\gamma_{k}-|x_{k}|^2}{\gamma_{k}}}}$.\\
    
    \item $-e^{i\lambda}\sin{\frac{\theta}{2}} =  -\left(\cos{\lambda} + i\sin{\lambda} \right) \sin{\frac{\theta}{2}}= \frac{x_{k}}{\sqrt{\gamma_{k}}}.$

   This admits one of the following solutions:
   \begin{enumerate}
       \item $\lambda = \arccos{\Big(- \frac{\sqrt{a}}{\sqrt{\gamma_{k}} \sin{\frac{\theta}{2}}}\Big)}$ \\  \item $\lambda = \arcsin{\Big(- \frac{\sqrt{b}}{\sqrt{\gamma_{k}} \sin{\frac{\theta}{2}}}\Big) }
    $.
   \end{enumerate} 
   
    \item Therefore $\phi = - \lambda$ is the solution to $e^{i(\lambda+\phi)}\cos{\frac{\theta}{2}}.$\\
    
\end{enumerate}
The unitary operator $U_{3}^{(x_k,\gamma_{k})}$ shown in Eq.~\eqref{eq:matrixCU3} is the key ingredient in this work, inspired by $S^r$, where $\gamma_{k}=\gamma_{k-1} -|x_{k-1}|^2$ is a complex iteration variable with $1 \leqslant k \leqslant M-1$, $\gamma_{0}=1$, and $M \in \mathbb{N}$. Furthermore, $x_{k}=\sqrt{a}+i\sqrt{b}$ is a complex number with its complex conjugate denoted by $x_{k}^{*}$. While $CS^r$ can load binary data only, the controlled operation of $U_{3}^{(x_k,\gamma_{k})}$ enables the loading of complex data.

\section{Related works}
\label{sec:storing_binary_patterns}

In this section, we briefly review two storage algorithms that motivated this work, namely FF-QRAM and PQM. We point readers to Refs.~\cite{park2019circuit} and~\cite{trugenberger2001probabilistic,trugenberger2002quantum} for more details on FF-QRAM and PQM, respectively.

\subsection{Flip-Flop Quantum RAM}

The objective of the FF-QRAM algorithm~\cite{park2019circuit} is to load classical data $\mathcal{D}$ with $M$ patterns shown in Eq.~(\ref{eq:data}) to a quantum state shown in Eq.~(\ref{eq:amp_enc}) using a quantum circuit. For this, the quantum circuit must receive as the initial state of the algorithm
\begin{equation}\label{dinitialstate}
\ket{\psi_{0}} = \sum_{t=0}^{2^{n}-1} \alpha_{t}\ket{t}_{B}\ket{0}_{R},
\end{equation}
where $B$ and $R$ indicate $n$ address qubits and a register with one qubit, respectively, $|t\rangle$ is a computational basis of $n$ qubits, $\alpha_t\in\mathbb{C}$, and  $M\le 2^n$.

For the input $(x_k,p_k)$, we can associate the following initial state:
 
 \begin{equation}\label{dinitiali}
 \ket{\psi_0}_{k}=\alpha_{k}\ket{p_k}_{B}\ket{0}_{R} + \sum_{t\neq k} \alpha_{t} \ket{t}_{B}\ket{0}_{R}.
\end{equation}
To obtain the state $|\psi_1\rangle_k$, the controlled bit-flip operator $\bar{c}X$, controlled by a classical input string $p_k$, is applied to the address qubits (indicates by $B$) of $\ket{\psi_0}_{k}$. The $\bar{c}X$ operation flips the target qubit if the control bit is $0$. This is the \textit{flip} operation that results in
\begin{equation}\label{dpsii1}
\ket{\psi_1}_{k}=\alpha_{k}\ket{1}_{B}^{\otimes n}\ket{0}_{R} + \sum_{\ket{\overline {t \oplus p_k}} \neq \ket{1}^{\otimes n}}\alpha_{t}\ket{\overline{t\oplus p_k}}_{B}\ket{0}_{R},
\end{equation}
with $\ket{\bar{i}}$ being the $i$th negated binary string.
The terms with an overline means that each bit is flipped if the control bit is zero. In the next step, a multi-qubit controlled operator that applies $R_y(\theta_k)=\cos(\theta_k/2)I-i\sin(\theta_k/2)Y$ on the register qubit $|0\rangle_R$ if the $n$-qubit address state is $|1\rangle^{\otimes n}$ yields the state
\begin{equation}\label{dpsii2}
\ket{\psi_2}_{k}=\alpha_{k}\ket{1}^{\otimes n}\ket{\theta_{k}}_{R} + \sum_{\ket{\overline {t \oplus p_k}} \neq \ket{1}^{\otimes n}}\alpha_{t}\ket{\overline{t\oplus p_k}}\ket{0}_{R},
\end{equation}
where $|\theta_k\rangle = \cos(\theta_k/2)|0\rangle + \sin(\theta_k/2)|1\rangle$. The angle $\theta_k$ is chosen based on $x_k$. Note that Ref.~\cite{park2019circuit} does not discuss explicitly on how to find the angle. Thus, in this section we show the condition necessary to obtain this angle. Later, in an example, we will show how the angle can be found.
Next, $\bar{c}X$ is applied again on the address qubits with the controlling bits being $p_k$. This is the \textit{flop} operation that results in
\begin{equation}\label{dpsii3}
 \ket{\psi_3}_{k}=\alpha_{k}\ket{p_k}_{B}\ket{\theta_{k}}_{R} + \sum_{t \neq k}\alpha_{t}\ket{t}_{B}\ket{0}_{R}.
 \end{equation}

We are ready to go through the algorithm again and to load the next input $(x_{k + 1},p_{k + 1})$. Proceeding in a similar way to the previous step and applying the flip-flop operation with $p_{k+1}$ and the controlled-rotation with the angle $\theta_{(k + 1)}$, we obtain
\begin{align}
 \ket{\psi_4}_{k, k+1}=&\alpha_{k}\ket{p_k}_{B}\ket{\theta_{k}}_{R} + \alpha_{k+1}\ket{p_{k+1}}_{B}\ket{\theta_{(k+1)}}_{R}\nonumber\\
 &+\sum_{t \neq k, k+1}\alpha_{t}\ket{t}_{B}\ket{0}_{R}.
\end{align}

Performing this procedure $M$ times, all $M$ inputs are registered in the quantum state as follows.
\begin{equation}\label{eqeliminacao}
\sum_{k=0}^{M-1}\alpha_{k}\ket{p_k}[\cos(\theta_{k}/2)\ket{0}+\sin(\theta_{k}/2)\ket{1}]+\sum_{t\neq k}\alpha_{t}\ket{t}\ket{0}_{R}.
\end{equation}

To complete the generalized amplitude encoding, firstly, we need to get rid of the terms where the register qubit is in $\ket{0}_{R}$. So we must obtain a convenient $\theta_{k}$, such that the probability of amplitude for state $\ket{1}$ is high. Then, we will evaluate

\begin{equation}
    P(\ket{1})=\sum_{k=0}^{M-1}\Big|\alpha_{k}\sin(\theta_{k}/2)\Big|^2.
\end{equation}
At this point, it is clear that the angle $\theta_k$ must satisfy the condition 
\begin{equation}
\label{eq:angle_condition}
\frac{\alpha_k\sin(\theta_k/2)}{\sqrt{P(\ket{1})}}=x_k.
\end{equation}

The FF-QRAM algorithm takes at least $O(nM)$ steps for each run of the quantum circuit, and there is an additional cost for post-selection~\cite{park2019circuit}. If $P(\ket{1})\approx 1$, then the desired state shown in Eq.~(\ref{eq:amp_enc}) is obtained with a high probability. Otherwise, the additional cost for post-selection can be non-negligible. We will show in Section~\ref{sec:limitations_FFQRAM} that this additional cost can compromise the efficiency of the algorithm.

The FF-QRAM algorithm is summarized in Algorithm~\ref{alg:ffqram}, and a quantum circuit for storing a particular data $(x_k,p_k)$ is depicted in Figure~\ref{ffqramcircuit}.

\begin{algorithm}
    \SetKwFunction{load}{load}
    \SetKwInOut{Input}{input}\SetKwInOut{Output}{output}
    \Input{data = $\{(x_k, p_k)\}_{k=0}^{M-1}$, $\ket{\psi_0} = \sum_t\alpha_t\ket{t}\ket{0}$}
    \Output{$\ket{\psi} = \sum_{k=0}^{M-1} x_k \ket{p_k}$, $s$}
    \BlankLine
    \Fn{\load($data$, $\ket{\psi}$)}
    {
     \ForEach{$(x_k, p_k) \in data$}
     {
        \BlankLine
        $\prod_{j=0}^{n-1} \bar{c}X_{p_k[j],m[j]} \ket{\psi}$\label{alg:ffstep3} \\
        \BlankLine
        $C^{n}Ry_{R, B}(\theta_k)\ket{\psi}_R\ket{0}_B$ \\
        \BlankLine
        $\prod_{j=0}^{n-1} \bar{c}X_{p_k[j],m[j]} \ket{\psi}$\label{alg:ffstep5}\\
    
     }
     $s = measure_B \ket{\psi}\ket{B}$\\
     }
     \KwRet\ $\ket{\psi}$, $s$
    
    \caption{FF-QRAM}
    \label{alg:ffqram}
\end{algorithm}

\begin{figure*}[ht]
    \centering
    \includegraphics[width=.7\textwidth]{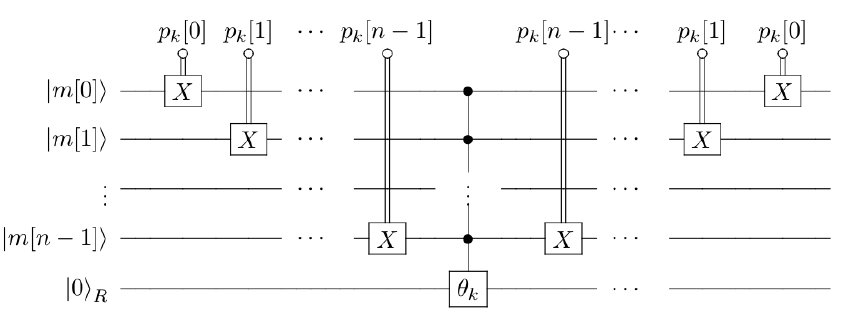}
    \caption{One iteration of the FF-QRAM data storage algorithm.}
\label{ffqramcircuit}
\end{figure*}

\subsection{Loading binary patterns}
The PQM~\cite{trugenberger2001probabilistic, trugenberger2002quantum} initialization algorithm is based on~\cite{ventura1999initializing,ventura2000quantum}, and is capable of storing a classical binary dataset $\data =
\cup_{k=0}^{\npat-1} \{p_k\}$, where $p_k$ is an $n$-bit pattern (i.e. $p_k\in\lbrace 0,1\rbrace^n$), in a quantum state as
\begin{equation}
\ket{M} = \frac{1}{\sqrt{\npat}} \sum_{k=0}^{M-1}\ket{p_k},
\label{eq1}
\end{equation}
with computational cost $O(nM)$.
 
The storage algorithm uses three quantum registers. The first register is $\ket{p}$, with $n$ qubits to which the target pattern $p_k$ is loaded in each iteration of the algorithm. The second is an auxiliary register $\ket{u} = \ket{u_{1}u_{2}}$ with two qubits initialized in state $\ket{01}$. The final register of $n$ qubits holds the memory, is denoted by $|m\rangle$, and $m[j]$ denotes $j$th qubit of $\ket{m}$. The initial state of the total system is
\begin{equation}
    |\psi_{0}\rangle = |0\ldots 0;01;0,\ldots 0\rangle.
\end{equation}
The underlying idea of the algorithm is to split this state into two parts in each iteration. One corresponds to the subspaces of the total quantum state with patterns stored in the memory register, and another to process a new pattern. These two parts are distinguished by the second qubit of the auxiliary register, $u_2$, which is $0$ for the subspaces that already store patterns and $1$ for the subspace to which a new pattern is being loaded (are processing terms). After storing a pattern, the first qubit of the auxiliary register, $|u_1\rangle$, is used to apply a controlled operator on $|u_2\rangle$ to split the state for storing the next pattern.
 
Algorithm~\ref{alg:storing} presents the PQM storage algorithm named \loadbinary. In the PQM storage algorithm, we use $\ket{\psi_{k_{i}}}$ to denote the quantum state in step $i$ of storing the pattern $p_k$.

\begin{algorithm}
\SetKwFunction{loadbinary}{load\_binary}
    \SetKwInOut{Input}{input}\SetKwInOut{Output}{output}
    \Input{data = $\{p_k\}_{k=0}^{M-1}$}
     \BlankLine
      \BlankLine
    \Output{$\ket{\psi} = \frac{1}{\sqrt{M}}\sum_{k=0}^{M-1} \ket{p_k}$}
    \BlankLine
    \Fn{\loadbinary($data$, $\ket{\psi}$)}
    \BlankLine
        {
    	Initial state $\ket{\psi_0} = 
    	\ket{0,0,\cdots,0;01;0,0,\cdots,0}$ \\ \label{str:initial}
    	\BlankLine
    	\ForEach{$p_k \in \data$}{ \label{str:loop}
    		$\ket{\psi_{k_{0}}} =$ Load $p_k$ into quantum register $\ket{p}$ \label{load}\\
    		\label{step3}
    		$\ket{\psi_{k_{1}}} = \prod_{j=0}^{n-1} 
    		C^{2}X_{(p_k[j]u_2),m_j}\ket{\psi_{k_{0}}}$ \\ \label{step4}
    		$\ket{\psi_{k_{2}}} = \prod_{j=0}^{n-1} X_{m_j} 
    		CX_{p_k[j],m_j}\ket{\psi_{k_{1}}}$ \\ \label{step5}
    		$\ket{\psi_{k_{3}}} = C^{n}X_{m,u_1}\ket{\psi_{k_{2}}}$ \\ 
    		\label{step6}
    		$\ket{\psi_{k_{4}}} = CS_{u_1,u_2}^{\npat-k}\ket{\psi_{k_{3}}}$ \\ 
    		\label{step7}
    		$\ket{\psi_{k_{5}}} = C^{n}X_{m,u_1}\ket{\psi_{k_{4}}}$ \\ 
    		\label{step8}
    		$\ket{\psi_{k_{6}}} = \prod_{j=0}^{n-1} CX_{p_k[j],m_j} 
    		X_{m_j}\ket{\psi_{k_{5}}}$\label{step9} \\
    		$\ket{\psi_{k_{7}}} = \prod_{j=0}^{n-1} 
    		C^{2}X_{(p_k[j]u_2),m_j}\ket{\psi_{k_{6}}}$ \label{step10}\\
    		Unload $p_k$ from quantum register $\ket{p}$\label{unload} \\
	    }
	    \KwRet\ $\ket{\psi}$
	}
	\caption{Probabilistic quantum memory storage algorithm}
	\label{alg:storing}
\end{algorithm}

In the step \ref{step4}, the transformation
\begin{equation}\label{eq3}
\ket{\psi_{k_{1}}}=\prod_{j=0}^{n-1} C^{2}X_{(p_k[j]u_2),m_j}\ket{\psi_{k_{0}}},
\end{equation}
copies the binary pattern $p_k$ from $\ket{p}$ register to the memory register $\ket{m}$. Note that these operations are applied to the $j$th memory qubit state $m[j]$ only if $p_k[j]$ and $u_2$ are in $\ket{1}$. 
In the step \ref{step5}, the transformation
\begin{equation}\label{eq4}
\ket{\psi_{k_{2}}} = \prod_{j=0}^{n-1} X_{m_j}CX_{p_k[j],m[j]}\ket{\psi_{k_{1}}},
\end{equation}
yields $m[j]=1$, if $p_k[j]= m[j]$, otherwise $m[j]=0$. In the step \ref{step6},
\begin{equation}\label{eq5}
\ket{\psi_{k_{3}}} = C^{n}X_{m,u_1}\ket{\psi_{k_{2}}}.
\end{equation}
Here, we have an operation controlled by $n$ bits of the memory, that is, if $m[j]=1$ for all values of $j$, then the bit-flip gate $X$ is applied to the bit $u_1$. This makes the qubit $u_1$ to assume the value $1$ for the term in processing. 
Step \ref{step7}, 
\begin{equation}\label{eq6a}
\ket{\psi_{k_{4}}} = CS_{u_1,u_2}^{M-k}\ket{\psi_{k_{3}}},
\end{equation}
is the central operation of the storing algorithm, which separates the new pattern to be stored, with the correct normalization factor. 
The steps \ref{step8}  and  \ref{step9} are the inverse operators of steps \ref{step6} and \ref{step5}, respectively. These will reset the values of $u_1$ and $m$ to the initial values. This results in the following state:
\begin{equation}\label{eq6}
\ket{\psi_{k_{6}}} = \frac{1}{\sqrt{M}} \sum_{s = 1}^{k} \ket{p_k;00;p_s} + \frac{\sqrt{M-k}}{\sqrt{M}}\ket{p_k;01;p_k}
\end{equation}
The step~\ref{step10},
\begin{equation}\label{eq7}
\ket{\psi_{k_{7}}} = \prod_{j=0}^{n-1}C^{2}X_{(p_k[j]u_2),m_j}\ket{\psi_{k_{6}}}, \end{equation}
resets the memory record of the term being processed ($u_2= 1$) to to $\ket{m}=\ket{0}$.

At this point, we are ready to store the next pattern $p_{k + 1}$. To do this, define $\ket{\psi_{{k + 1}}}_{0}$ from $\ket{\psi_{k_{7}}}$ with the pattern $p_{k + 1}$ loaded in the register $\ket{p}$, and perform a new iteration of the algorithm. The process is iterated until $\ket{u_2}=\ket{0}$ in all terms of the quantum state, indicating that all patterns $p_k$ are stored in the memory. Note that although the loading procedure is deterministic, memory read-out procedures are probabilistic due to the quantum measurement postulates. Thus this is the  initialization algorithm of a \textit{probabilistic} quantum memory.

Given that the patterns to be stored are of $n$ bits, steps \ref{load} and \ref{unload} require the same number of steps, $O(n)$. Then for $M$ patterns to be stored, the entire algorithm requires $O(nM)$ steps to store all patterns. 
The circuit corresponding to one iteration of the PQM algorithm is shown in Figure~\ref{circuitPQM}.
 \begin{figure*}[ht]
\centering \includegraphics[width=.8\textwidth]{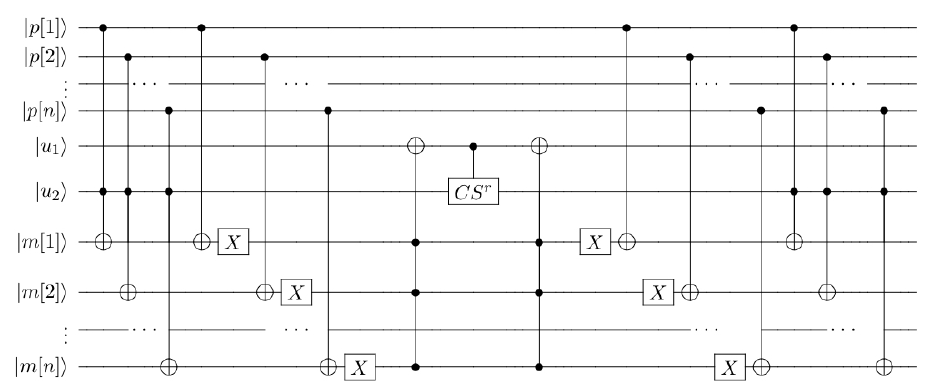}
\caption{One iteration of the PQM data storage algorithm.} 
\label{circuitPQM}
\end{figure*}

In what follows, we demonstrate how to use the algorithm above to store an example set of patterns $$\data=\{p_0=00,  p_1=01\}.$$ First iteration stores the pattern $p_0=00$. Loading the pattern in the initial state by step \ref{step3} results in  $$\ket{\psi_{0_{0}}}=\ket{00;01;00}.$$ The step \ref{step4} gives $\ket{\psi_{0_{0}}}=\ket{\psi_{0_{1}}}$, and by step \ref{step5},  $$\ket{\psi_{0_{2}}}=\ket{00;01;11}.$$ By step \ref{step6},  $$\ket{\psi_{0_{3}}}=\ket{00;11;11}.$$ Now, for $k=0$, $r=2$. Hence in step \ref{step7},
 $$S^2 = \begin{bmatrix}\label{matrixCS2}
\frac{1}{\sqrt{2}}  & \frac{1}{\sqrt{2}} \\
\frac{-1}{\sqrt{2}} & \frac{1}{\sqrt{2}} \\
\end{bmatrix}.$$
Following step \ref{step7}, which calculates $CS^{2}\ket{11}$, the state becomes
\begin{equation}\label{eq8}
\ket{\psi_{0_{4}}} =  \frac{1}{\sqrt{2}}\ket{00;10;11}+ \frac{1}{\sqrt{2}}\ket{00;11;11}.
\end{equation}
Applying steps \ref{step8} and \ref{step9} resets $u_1$ and the memory register, obtaining, respectively:
\begin{equation}\label{eq9}
\ket{\psi_{0_{5}}} = \frac{1}{\sqrt{2}}\ket{00;00;11}+ \frac{1}{\sqrt{2}}\ket{00;01;11},
\end{equation}
and
\begin{equation}\label{eq10}
\ket{\psi_{0_{6}}}= \frac{1}{\sqrt{2}}\ket{00;00;00}+ \frac{1}{\sqrt{2}}\ket{00;01;00}.
\end{equation}
By step \ref{step10}, we have $\ket{\psi_{0_{6}}}=\ket{\psi_{0_{7}}}$.
\newline

At this point, the quantum state has two terms. The first term has the auxiliary qubit $u_{2} = 0$, which indicates that the pattern $p_0=00$ is stored in memory $m$. On the other hand, in the second term, $u_{2} = 1$. This means that the term is in processing and can receive the next pattern to be stored at the memory. Now, we will do a new iteration in the algorithm to store the pattern $p_1=01$. 

By step \ref{step3}, as $\ket{\psi_{1_{0}}}$ is equal to  $\ket{\psi_{0_{7}}}$ with the pattern $p_1$ loaded in the register, the quantum state becomes 
$$\ket{\psi_{1_{0}}}=\frac{1}{\sqrt{2}}\ket{01;00;00}+ \frac{1}{\sqrt{2}}\ket{01;01;00}.$$
Then, by steps \ref{step4}, \ref{step5} and \ref{step6}, we have respectively:
$$\ket{\psi_{1_{1}}}=\frac{1}{\sqrt{2}}\ket{01;00;00}+ \frac{1}{\sqrt{2}}\ket{01;01;01},$$
$$\ket{\psi_{1_{2}}}=\frac{1}{\sqrt{2}}\ket{01;00;10}+ \frac{1}{\sqrt{2}}\ket{01;01;11},$$
and
$$\ket{\psi_{1_{3}}}=\frac{1}{\sqrt{2}}\ket{01;00;10}+ \frac{1}{\sqrt{2}}\ket{01;11;11}.$$
Now, for $k=1$, $r=1$. Hence in step \ref{step7},
$$S^1 = \begin{bmatrix}\label{matrixCS1}
 0  & 1 \\
 -1 & 0 \\
\end{bmatrix}.$$
Following step \ref{step7}, which results in $CS^{1}\ket{00}=\ket{00}$ and $CS^{1}\ket{11}=\ket{10}$, the state becomes
$$\ket{\psi_{1_{4}}}=\frac{1}{\sqrt{2}}\ket{01;00;10}+ \frac{1}{\sqrt{2}}\ket{01;10;11}.$$
Applying steps \ref{step8} and \ref{step9} resets $u_1$ and the memory register, respectively:
$$\ket{\psi_{1_{5}}}=\frac{1}{\sqrt{2}}\ket{01;00;10}+ \frac{1}{\sqrt{2}}\ket{01;00;11},$$
and
$$\ket{\psi_{1_{6}}}=\frac{1}{\sqrt{2}}\ket{01;00;00}+ \frac{1}{\sqrt{2}}\ket{01;00;01}.$$
Finally, as $\ket{\psi_{1_{6}}}=\ket{\psi_{1_{7}}}$ and $u_2=0$ in all terms of $\ket{\psi_{1_{7}}}$, there are no more processing terms and we conclude that the binary patterns are stored in the memory $m$, producing the state

\begin{equation}\label{eq14}
\ket{M} = \frac{1}{\sqrt{2}}\ket{01;00;00}+ \frac{1}{\sqrt{2}}\ket{01;00;01}.
\end{equation}
This is the quantum state desired in Eq.~\eqref{eq1} for the example data set.

\section{Limitation in the FF-QRAM algorithm}
\label{sec:limitations_FFQRAM}

In this section, we show that the FF-QRAM algorithm becomes inefficient if the initial state in Eq.~\eqref{dinitialstate} is used to load $v \ll M$ patterns as in the original design. More specifically, the post-selection can make the algorithm a non-viable choice for efficiently storing real amplitudes in a quantum state. For this, consider the following list of continuous data:

\begin{equation*}
    \left\{\left(\sqrt{0.3},\ket{000}\right), \left(\sqrt{0.7},\ket{001}\right)\right\},
\end{equation*}
meaning that $x_0=\sqrt{0.3}$, $x_1=\sqrt{0.7}$, $n=3$, and $M=2$. The initial state is simply $|+\rangle^{\otimes 3}_{B}|0\rangle_R$.

After step \ref{alg:ffstep3} of the first iteration in Algorithm \ref{alg:ffqram},
\begin{align}
\ket{\psi_{1}}_{0}=\frac{1}{2\sqrt{2}}\Big(&\ket{111}+\ket{110}+\ket{101}+\ket{100} \nonumber\\
+ &\ket{011}+\ket{010}+\ket{001}+\ket{000}\Big)\ket{0}_{R}.
\end{align}
In the next step, a controlled-rotation, $C^3R_y(\theta_0)$, is applied to rotate the register qubit that is entangled with $|111\rangle_B$:
\begin{align}\label{dtetazero}
\ket{\psi_{2}}_{x_0}=&\frac{1}{2\sqrt{2}}\ket{111}\Big[ \cos\Big({\frac{\theta_{0}}{2}}\Big)\ket{0}+\sin\Big({\frac{\theta_{0}}{2}}\Big)\ket{1}\Big]\nonumber\\
+&\frac{1}{2\sqrt{2}}\sum_{\ket{t}\neq \ket{111}}\ket{t}\ket{0}_{R}
\end{align}
With $\theta_{0}=2\arcsin{\sqrt{0.3}}=1.159$, Eq.~\eqref{dtetazero} becomes
\begin{align}
 \ket{\psi_{2}}_{x_0}= \frac{1}{2\sqrt{2}}&\ket{111}\Big(\sqrt{0.7}\ket{0}+\sqrt{0.3}\ket{1}\Big)\nonumber\\
 +\frac{1}{2\sqrt{2}}&\Big(\ket{110}+\ket{101}+\ket{100} +\ket{011}\nonumber\\
 +&\;\;\ket{010}+\ket{001}+\ket{000}\Big)\ket{0}_{R}.
\end{align}
Step \ref{alg:ffstep5} in Algorithm \ref{alg:ffqram} completes the first iteration to produce
\begin{align}
 \ket{\psi_{3}}_{x_0}= \frac{1}{2\sqrt{2}}&\ket{000}\Big(\sqrt{0.7}\ket{0}+\sqrt{0.3}\ket{1}\Big)\nonumber\\
 +\frac{1}{2\sqrt{2}}&\Big(\ket{001}+\ket{010}+\ket{011} +\ket{100}\nonumber\\
+&\;\;\ket{101}+\ket{110}+\ket{111}\Big)\ket{0}_{R}.
\end{align}

Similar procedure is followed to load $\big(\sqrt{0.7},\ket{001}\big)$. After repeating up to step \ref{alg:ffstep3} in Algorithm \ref{alg:ffqram}, the state becomes
\begin{align}
 \ket{\psi_{4}}_{x_0, x_1}= \frac{1}{2\sqrt{2}}&\ket{110}\Big(\sqrt{0.7}\ket{0}+\sqrt{0.3}\ket{1}\Big)\nonumber\\
 +\frac{1}{2\sqrt{2}}&\Big(\ket{111}+\ket{100}+\ket{101} +\ket{010}\nonumber\\
+&\;\;\ket{011}+\ket{000}+\ket{001}\Big)\ket{0}_{R}.
\end{align}
To load the amplitude $ x_1 =\sqrt{0.7} $, the controlled rotation is used to produce
\begin{align}\label{dthetaum}
 \ket{\psi_{5}}_{x_0, x_1}= &\frac{1}{2\sqrt{2}}\ket{110}\Big(\sqrt{0.7}\ket{0}+\sqrt{0.3}\ket{1}\Big)\nonumber\\
 +&\frac{1}{2\sqrt{2}}\ket{111}\Big[ \cos\Big({\frac{\theta_{1}}{2}}\Big)\ket{0}+\sin\Big({\frac{\theta_{1}}{2}}\Big)\ket{1}\Big]\nonumber\\
 +&\frac{1}{2\sqrt{2}}\sum_{\ket{t}\neq \ket{111},\ket{110}}\ket{t}\ket{0}_{R}
\end{align}
With $\theta_{1}=2\arcsin\sqrt{0.7}=1.1982$, Eq.~\eqref{dthetaum} becomes
\begin{align}
 \ket{\psi_{5}}_{x_0, x_1}= \frac{1}{2\sqrt{2}}&\ket{110}\Big(\sqrt{0.7}\ket{0}+\sqrt{0.3}\ket{1}\Big)\nonumber\\
 +\frac{1}{2\sqrt{2}}&\ket{111}\Big(\sqrt{0.3}\ket{0}+\sqrt{0.7}\ket{1}\Big)\nonumber\\
+\frac{1}{2\sqrt{2}}&\Big(\ket{100}+\ket{101}+\ket{010}\nonumber\\
+&\;\;\ket{011}+\ket{000}+\ket{001}\Big)\ket{0}_{R}.
\end{align}
The second iteration is completed with step \ref{alg:ffstep5} of Algorithm \ref{alg:ffqram}, yielding a quantum state
\begin{align}\label{psi6}
 \ket{\psi_{6}}_{x_0, x_1}= \frac{1}{2\sqrt{2}}&\ket{000}\Big(\sqrt{0.7}\ket{0}+\sqrt{0.3}\ket{1}\Big)\nonumber\\
 +\frac{1}{2\sqrt{2}}&\ket{001}\Big(\sqrt{0.3}\ket{0}+\sqrt{0.7}\ket{1}\Big)\nonumber\\
+\frac{1}{2\sqrt{2}}&\Big(\ket{010}+\ket{011}+\ket{100}\nonumber\\
+&\;\;\ket{101}+\ket{110}+\ket{111}\Big)\ket{0}_{R}.
\end{align}

The efficiency of our storage will be given by the probability $P(\ket{1})$. We obtain that $P(\ket{1})=0.125$, while $P(\ket{0})=0.875$. This implies that after the measurement of the state in Eq.~\eqref{psi6}, the chance of loading $x_0$ and $x_1$ in a quantum format as desired is 12.5\% until this step of the algorithm.

Due to the probabilistic nature of the FF-QRAM, the post-selection procedure is necessary, which repeats the same algorithm until the register qubit is measured onto $|1\rangle$. Theorem 1 shows that the FF-QRAM can have an exponential computational cost in the worst case.

\begin{thm}{\label{thm:ffqram_ffpqmram}If $\ket{\psi_0} = \ket{+}^{\otimes n}\ket{0}$, $data = \{x_{k}, p_k\}_{k=0}^{M-1}$, then to create a quantum state $\sum_{k=0}^{M-1}x_k\ket{p_k}$, the FF-QRAM post-selection success probability is $P(\ket{1}) = \frac{1}{2^ n}$.} 
\end{thm}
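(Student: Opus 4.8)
The plan is to track the amplitude of the register qubit $R$ from the uniform input through to the state~\eqref{eqeliminacao} and then read off the weight that lands on $\ket{1}_R$. First I would put the input in the form of~\eqref{dinitialstate}: since $\ket{\psi_0}=\ket{+}^{\otimes n}\ket{0}=2^{-n/2}\sum_{t=0}^{2^{n}-1}\ket{t}_{B}\ket{0}_{R}$, all address amplitudes are equal, $\alpha_t=2^{-n/2}$, and in particular $\alpha_{p_k}=2^{-n/2}$ for each stored pattern. Then, invoking the single-iteration analysis~\eqref{dpsii1}--\eqref{dpsii3}, storing $(x_k,p_k)$ leaves every branch $\ket{t}_{B}\ket{0}_{R}$ with $t\ne p_k$ untouched and sends $\alpha_{p_k}\ket{p_k}_{B}\ket{0}_{R}\mapsto \alpha_{p_k}\ket{p_k}_{B}(\cos(\theta_k/2)\ket{0}+\sin(\theta_k/2)\ket{1})_{R}$; since the $M$ patterns are distinct and each is processed exactly once, after the final iteration the state is~\eqref{eqeliminacao} with $\alpha_k=2^{-n/2}$, so the only branches carrying nonzero amplitude on $\ket{1}_R$ are the $M$ stored ones.

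Consequently $P(\ket{1})=\sum_{k=0}^{M-1}|\alpha_{p_k}\sin(\theta_k/2)|^2=2^{-n}\sum_{k=0}^{M-1}|\sin(\theta_k/2)|^2$, and it remains to identify $\sin(\theta_k/2)$. With all $\alpha_t$ equal, condition~\eqref{eq:angle_condition} forces the unnormalized $\ket{1}_R$ amplitudes $2^{-n/2}\sin(\theta_k/2)$ to be a common $k$-independent multiple of the targets $x_k$; the direct amplitude-encoding choice used by FF-QRAM fixes that multiple to $1$, i.e. $\theta_k=2\arcsin|x_k|$ so that $|\sin(\theta_k/2)|=|x_k|$ --- exactly as in the worked example above, where $\theta_0=2\arcsin\sqrt{0.3}$ and $\theta_1=2\arcsin\sqrt{0.7}$. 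Substituting this and using the normalization $\sum_{k}|x_k|^2=1$ from~\eqref{eq:data} gives $P(\ket{1})=2^{-n}\sum_{k=0}^{M-1}|x_k|^2=1/2^{n}$.

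The amplitude tracking in the first paragraph is pure bookkeeping on top of the iteration formulas already derived, so I expect the only delicate point to be the identification $|\sin(\theta_k/2)|=|x_k|$ in the second paragraph: one has to argue that the angle the algorithm assigns is the direct encoding rather than a rescaled one. The self-consistency of~\eqref{eq:angle_condition} with $P(\ket{1})=\sum_k|\alpha_k\sin(\theta_k/2)|^2$ already shows that any admissible assignment has $\sin(\theta_k/2)\propto x_k$ with a single constant, and the canonical FF-QRAM rule illustrated by the example sets that constant to $1$; granting this, the normalization constraint finishes the computation and yields $P(\ket{1})=1/2^{n}$.
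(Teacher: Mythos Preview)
Your proof is correct and follows essentially the same route as the paper: write $\ket{+}^{\otimes n}$ with uniform amplitudes $\alpha_t=2^{-n/2}$, run the FF-QRAM iterations to reach~\eqref{eqeliminacao}, take $\theta_k=2\arcsin x_k$ so that $\sin(\theta_k/2)=x_k$, and conclude $P(\ket{1})=2^{-n}\sum_k|x_k|^2=2^{-n}$ via the normalization in~\eqref{eq:data}. The paper simply asserts the angle choice $\theta_k=2\arcsin x_k$, whereas you spend a paragraph justifying it through~\eqref{eq:angle_condition} and the worked example; this extra care is fine but not a different argument.
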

\begin{proof}

According to the FF-QRAM algorithm, the initial state can be written as
\begin{equation}\nonumber
\ket{\psi_0}=\sum_{p_k\in \mathcal{P}}\frac{1}{\sqrt{2^n}}\ket{p_k}\ket{0}_{R}+\frac{1}{\sqrt{2^n}}\sum_{t\notin \mathcal{P}}\ket{t}\ket{0}_{R},
\end{equation}
where $\mathcal{P} = \{p_0, \cdots, p_{M-1}\}$. In each iteration of the algorithm, the controlled rotation gate takes $\theta_k=2\arcsin{x_k}$ to satisfy Eq.~\eqref{eq:angle_condition}. Thus after performing $M$ iterations to load the desired amplitudes, we have the following state.
\begin{equation}\nonumber
\ket{\psi_f}=\sum_{k=0}^{M-1}\frac{1}{\sqrt{2^n}}\ket{p_k}[y_{k}\ket{0}+ x_{k}\ket{1}]+\frac{1}{\sqrt{2^n}}\sum_{t\notin \mathcal{P}}\ket{t}\ket{0}_{R}.
\end{equation}
Since we want to obtain $QRAM(\ket{\psi_0})=\sum_{k=0}^{M-1}x_k\ket{p_k}$, we need to study the probability of obtaining $\ket{1}$ when measuring the register qubit. The success probability is calculated as
$$P(\ket{1}) = \sum_{k=0}^{M-1} \frac{1}{2^n} |x_{k}|^2 = \frac{1}{2^n}\sum_{k=0}^{M-1}  |x_{k}|^2 = \frac{1}{2^n},$$
since $\sum_{k=0}^{M-1}|x_k|^2=1$ due to the normalization condition.
\end{proof}

The previous theorem leads to following corollaries.
\begin{crl}\label{corollary_1}
If $n$, the number of bits, in the quantum state increases, the probability $ P(\ket{1})=\frac{1}{2^n} \rightarrow 0 $, that is, the chance of success in the process decreases.
\end{crl}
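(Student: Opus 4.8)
The plan is to invoke Theorem~\ref{thm:ffqram_ffpqmram} directly and then analyse the elementary behaviour of the resulting expression as a function of $n$. By Theorem~\ref{thm:ffqram_ffpqmram}, whenever the FF-QRAM is run with the uniform initial state $\ket{+}^{\otimes n}\ket{0}$ on a data set of $M$ continuous amplitudes, the post-selection success probability is exactly $P(\ket{1}) = 1/2^{n}$, independently of $M$ and of the particular values $x_k$ and patterns $p_k$. Hence the whole claim reduces to a statement about the real sequence $a_n = 2^{-n}$, and nothing about the quantum circuit itself needs to be re-examined.

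First I would record that $a_n = 2^{-n}$ is strictly decreasing in $n$: since $2^{n+1} = 2\cdot 2^{n} > 2^{n}$ for every $n \ge 0$, we have $a_{n+1} = 1/2^{n+1} < 1/2^{n} = a_n$, which is the precise meaning of ``the chance of success decreases'' as the number of address bits grows. Second, I would take the limit: because $2^{n}\to\infty$ as $n\to\infty$ (for instance via the Bernoulli-type bound $2^{n}\ge 1+n$), it follows that $P(\ket{1}) = 1/2^{n}\to 0$. Combining these two observations yields exactly the statement of the corollary.

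There is no genuine obstacle here; the only point to be careful about is that the identity $P(\ket{1}) = 1/2^{n}$ already comes from Theorem~\ref{thm:ffqram_ffpqmram} with the worst-case hypothesis built in, so no separate supremum over admissible data sets is required — the corollary is a one-line consequence once the theorem is available. If a quantitative reading is desired, it suffices to add that the expected number of circuit repetitions needed to observe the outcome $\ket{1}$ is $1/P(\ket{1}) = 2^{n}$, i.e.\ exponential in the number of address qubits, matching the interpretation used in the surrounding discussion.
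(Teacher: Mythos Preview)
Your proposal is correct and matches the paper's treatment: the paper states the corollary immediately after Theorem~\ref{thm:ffqram_ffpqmram} with no separate proof, treating it as a direct consequence of $P(\ket{1})=1/2^n$. Your explicit verification that $2^{-n}$ is strictly decreasing and tends to zero simply spells out what the paper leaves implicit.
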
 

\begin{crl}\label{corollary_2} 
If our initial state has exactly the superposition of $M$ computational basis of $n$ qubits, which corresponds to $\mathcal{P}$, that is, 
$$\ket{\psi_0}=\sum_{k=0}^{M-1}\frac{1}{\sqrt{M}}\ket{p_k}\ket{0},$$ then the probability $ P(\ket{1})=\frac{1}{M} \rightarrow 0 $ as $M$ increases.
\end{crl}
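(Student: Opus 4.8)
The plan is to mirror the proof of Theorem~\ref{thm:ffqram_ffpqmram} almost verbatim, replacing the uniform superposition over all $2^{n}$ computational basis states by the uniform superposition over the $M$ strings in $\mathcal{P}=\{p_0,\dots,p_{M-1}\}$ only. First I would write the hypothesised initial state as $\ket{\psi_0}=\sum_{k=0}^{M-1}\tfrac{1}{\sqrt{M}}\ket{p_k}\ket{0}_{R}$ (here the $p_k$ are pairwise distinct, as is needed for $\ket{\psi_0}$ to be normalized) and observe that the flip step $\prod_{j}\bar{c}X_{p_k[j],m[j]}$ of iteration $k$ maps $\ket{p_j}\mapsto\ket{\overline{p_j\oplus p_k}}$, which equals $\ket{1}^{\otimes n}$ if and only if $j=k$. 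Hence the controlled rotation $C^{n}R_y(\theta_k)$ of that iteration acts only on the branch $\ket{p_k}$, whose register qubit is still $\ket{0}$ at that moment, and a branch $\ket{p_j}$ loaded in an earlier iteration is never re-rotated.

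Next I would run all $M$ iterations with $\theta_k=2\arcsin x_k$, exactly as in Theorem~\ref{thm:ffqram_ffpqmram}, so that condition~\eqref{eq:angle_condition} is satisfied. The resulting state is $\ket{\psi_f}=\sum_{k=0}^{M-1}\tfrac{1}{\sqrt{M}}\ket{p_k}\!\left[y_k\ket{0}+x_k\ket{1}\right]$ with $y_k=\cos(\theta_k/2)$; unlike in Theorem~\ref{thm:ffqram_ffpqmram}, there is no residual $\sum_{t\notin\mathcal{P}}$ term because the initial state contained no such basis states. The probability of post-selecting the register onto $\ket{1}$ is therefore $P(\ket{1})=\sum_{k=0}^{M-1}\tfrac{1}{M}\lvert x_k\rvert^{2}=\tfrac{1}{M}\sum_{k=0}^{M-1}\lvert x_k\rvert^{2}=\tfrac{1}{M}$, using the normalization $\sum_{k}\lvert x_k\rvert^{2}=1$ from Eq.~\eqref{eq:data}. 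Finally $\tfrac{1}{M}\to 0$ as $M\to\infty$ yields the claim.

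The one point that needs a little care — rather than a real obstacle, since this statement is essentially a specialization of Theorem~\ref{thm:ffqram_ffpqmram} — is the claim in the first paragraph that the FF-QRAM iterations still behave as intended when the address register is not the full uniform superposition: one must confirm that no branch other than the intended $\ket{p_k}$ is ever promoted to $\ket{1}^{\otimes n}$ during iteration $k$, and that an already-loaded branch is not disturbed. Both follow entirely from the distinctness of the patterns, after which the probability computation is identical to that of Theorem~\ref{thm:ffqram_ffpqmram} with $2^{n}$ replaced by $M$.
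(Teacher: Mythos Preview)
Your proposal is correct and is exactly the argument the paper intends: Corollary~\ref{corollary_2} is stated without a separate proof because it follows by repeating the computation in the proof of Theorem~\ref{thm:ffqram_ffpqmram} with the uniform amplitude $1/\sqrt{M}$ in place of $1/\sqrt{2^n}$ and with the residual $\sum_{t\notin\mathcal{P}}$ term absent. Your extra remark about why only the branch $\ket{p_k}$ is rotated in iteration $k$ (distinctness of the patterns) is a useful clarification that the paper leaves implicit.
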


\section{Improving FF-QRAM and PQM storing algorithms}\label{ImproFF-QRAMandPQM}

In this section, we present methods for reducing the computational cost of the circuit-based quantum random access memory. First, we show that the post-selection probability can be improved by using the PQM algorithm to prepare the input state for the FF-QRAM. For the remainder of this work, the combination of the aforementioned algorithms shall be referred to as FFP-QRAM. Then, we show that the post-selection probability can be further improved by preprocessing the classical data to be loaded. These methods change the computational cost from $O(CMn)$ to $O(C'Mn)$, where $C'<C$. Finally, we present a new loading algorithm inspired by the PQM and FF-QRAM that completes the generalized amplitude encoding without post-selection, reducing the computational cost from $O(CMn)$ to $O(Mn)$.

\subsection{FFP-QRAM - Combining PQM and FF-QRAM}\label{FFP-QRAM}

The PQM algorithm has computational cost $O(Mn)$ for loading $M$ $n$-bit binary patterns in superposition with the same amplitude. This is the same computational cost of the FF-QRAM if there is no need to carry out a post-selection.

 If we have a quantity $M$ of binary patterns  to store in a state with $n$ qubits and $M \ll 2^n$, the PQM algorithm keeps efficient with costs associated with the number of patterns.  On the other hand, as shown in the previous section, when trying to store $M \ll 2^n$ patterns, the post-selection process of the FF-QRAM algorithm initialized with $\ket{+}^{\otimes n}$ will have exponential computational cost.

From these two observations, we propose to combine PQM and FF-QRAM algorithms, to store the desired quantity of patterns as a quantum state in superposition. 

Consider an input database given by Eq.~\eqref{eq:data}. Suppose we want to obtain a quantum state Eq.~\eqref{eq:amp_enc}, with $M \ll 2^n$  terms.
 We use Algorithm \ref{alg:storing} to store the  binary patterns $p_{k}$. After $M$ iterations of the PQM algorithm, we will have the following final state 

\begin{equation}\label{cfinalstatePQM}
   \ket{\psi_{f}}_{\text{PQM}} = \frac{1}{\sqrt{M}}\sum_{k=0}^{M-1}\ket{p_k}.
\end{equation} 
This final state Eq.~\eqref{cfinalstatePQM} in PQM will be the initial state of the FF-QRAM that will have an increased probability of post-selection $P(\ket{1})$. However, as described in Corollary~\ref{corollary_2}, $P(\ket{1})$ approaches 0 when we increase the number of patterns $M$. To obtain a higher post-selection probability we define a preprocessing procedure in the next section.

\subsection{Improvement via Data Preprocessing}\label{sec:preprocessing}
\label{subsec:ffqram_ffpqmram_preprocessing}

The post-selection probability can also be improved with a preprocessing strategy. The preprocessing consists of dividing every entry in the input state by 
\begin{equation*}
c = \max_{0 \leq k < M}(|x_k|).
\end{equation*}
After the preprocessing, Eq.~\eqref{eq:pre-postselect} becomes
\begin{equation*}
\label{eq:pre1-postselect}
   \frac{1}{\sqrt{M}}\sum_{k=0}^{M-1}|k\rangle |p_k\rangle\left(\sqrt{1-\left|\frac{x_k}{c}\right|^2}|0\rangle+ \frac{x_k}{c}|1\rangle\right).
\end{equation*}
After the post-selection measurement we obtain the state
\begin{equation*}
\label{eq:pre2-postselect}
   \frac{1}{\sqrt{M}}\sum_{k=0}^{M-1}\frac{\frac{x_k}{c}|k\rangle |p_k\rangle|1\rangle}{\sqrt{\frac{1}{M}\sum_{j=0}^{M-1}\left|\frac{x_j}{c}\right|^2}} = \sum_{k=0}^{M-1}x_k|k\rangle |p_k\rangle|1\rangle,
\end{equation*}
and we conclude the state preparation by uncomputing $\ket{k}$. The post-selection probability becomes $\frac{1}{c^2M}$. We increased the post-selection success probability by a factor $\frac{1}{c^2}$ with $0<c<1$. We will demonstrate the impact of the preprocessing strategy in the FF-QRAM and FFP-QRAM with two sets of experiments. Both experiments create a sparse state because this is the worst case for the FF-QRAM.

In the first set of experiments, the input consists of two $n$-bit patterns with amplitudes $\sqrt{0.3}$ and $\sqrt{0.7}$ and FF-QRAM initial state $\ket{+}^{\otimes n}$. Figure \ref{fig:ffqram_renormalised_and_original} shows the estimated post-selection probability $P(\ket{1})$ in FF-QRAM for this example with $n=1, \dots, 8$, and with and without the preprocessing strategy. There is an improvement in the post-selection probability, but in the worst case the post-selection probability approaches zero and the FF-QRAM will have an exponential computational cost.

In the second set of experiments, the input consists of two $n$-bit patterns with amplitudes $\sqrt{0.3}$ and $\sqrt{0.7}$ and we used the FFP-QRAM. Figure \ref{fig:ffpqmram_renormalized_and_original} shows the estimated post-selection probability $P(\ket{1})$ in FFP-QRAM for this example with $n=1, \dots, 8$, and with and without the preprocessing strategy. The FFP-QRAM has an improved post-selection probability when used to prepare certain sparse vectors and has the same success probability as FF-QRAM when used to prepare a dense state. Given the improvement in the post-selection success probability, in the next sections, we only use the FF-QRAM and FFP-QRAM with the preprocessing strategy.

\begin{figure}
    \centering
    \includegraphics[width=.5\textwidth]{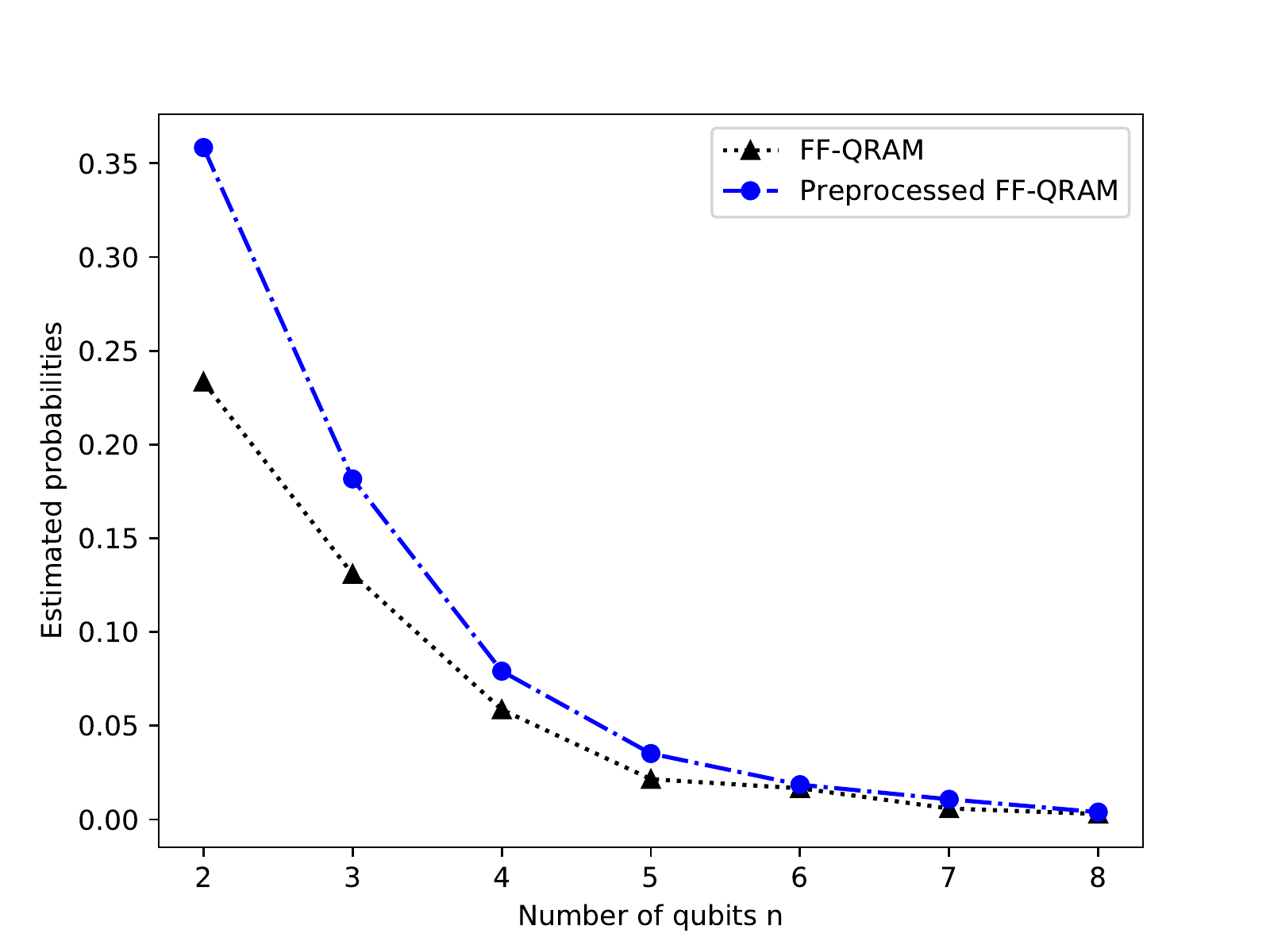}
    \caption{\label{fig:ffqram_renormalised_and_original} Comparison between the original FF-QRAM with  a FF-QRAM using the preprocessing strategy. The number of zero-valued data entries is $2^n-2$, where $n$ is the number of qubits represented on the $x$ axis.}
\end{figure}

\begin{figure}
    \centering
    \includegraphics[width=.5\textwidth]{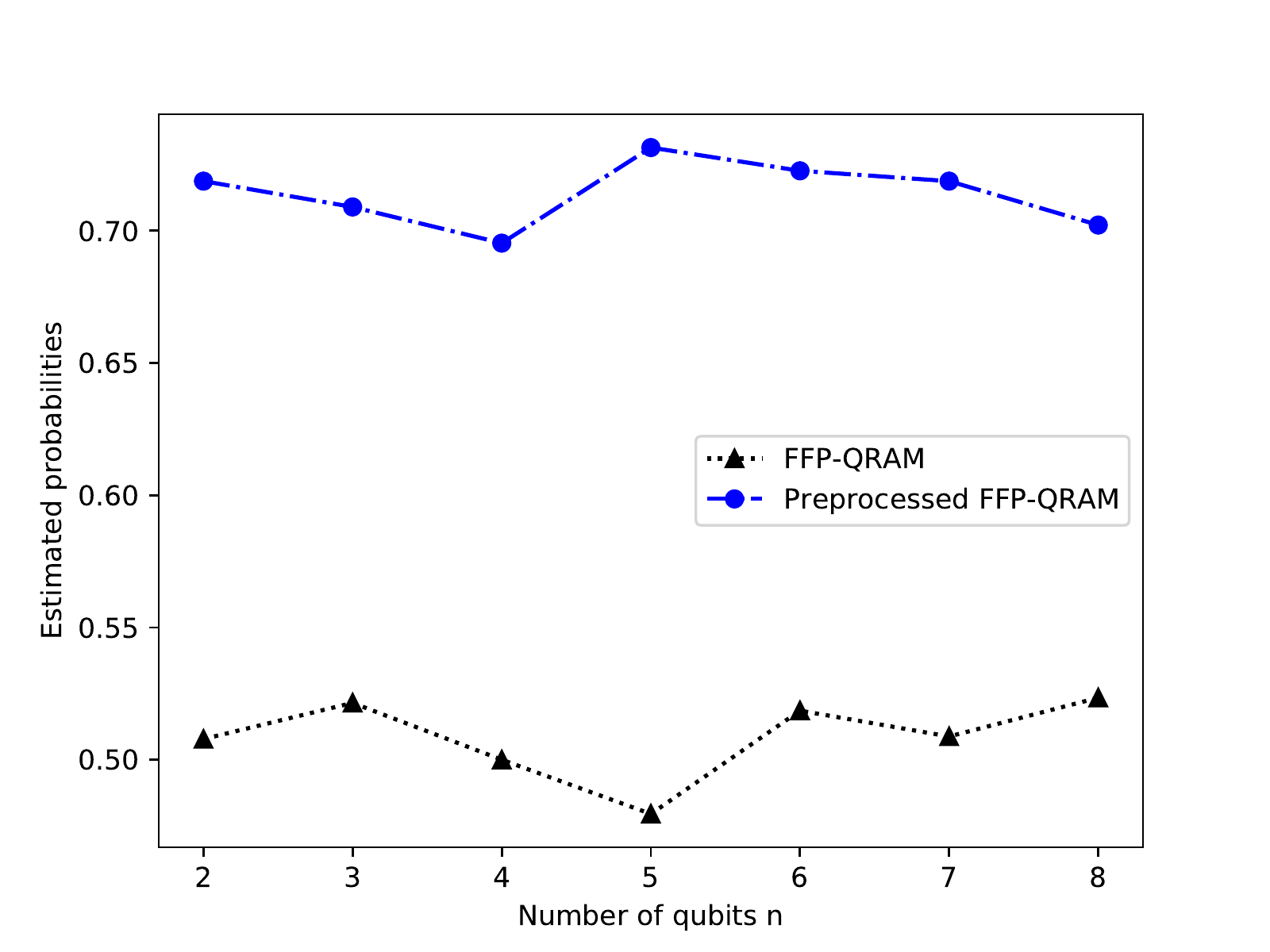}
    \caption{\label{fig:ffpqmram_renormalized_and_original} Comparison between the original FFP-QRAM with  a FFP-QRAM using the preprocessing strategy. The number of zero-valued data entries is $2^n-2$, where $n$ is the number of qubits. 
    }
    
\end{figure}

All quantum circuits were implemented using $python$ programming language, with the quantum computing framework \textit{Qiskit}~\cite{Qiskit}. To perform the numerical simulations, we used \textit{QASM} simulator within \textit{Qiskit}. The \textit{QASM} simulator is a backend that emulates the implemented quantum circuit as if it was being executed in a real quantum device.
To obtain answers in this kind of simulator, it is necessary to perform measurements on the qubits.
Since a quantum circuit's output is probabilistic, the simulation must be executed several times.
The number of repetitions in all simulations was $1024$.

Next, we will present a new algorithm for loading complex data without post-selection.

\subsection{A-PQM - Adapted PQM loading procedure for storing continuous data}
\label{A-PQM}
Based on~\cite{park2019circuit,trugenberger2001probabilistic, mottonen2005transformation}, we propose a modified version of the PQM storage algorithm to load patterns with continuous amplitudes. We modify the $S^r$ operator and adopt the flip flop operators of the FF-QRAM algorithm. The proposed algorithm is capable of storing the desired amount of complex data, such as amplitudes of a state in superposition, with computational cost $O(Mn)$ and without a post-selection procedure. We refer to this algorithm as adapted PQM (A-PQM).

Consider the input database given by Eq.~\eqref{eq:data}. We will use Algorithm~\ref{alg:rstoring} to obtain the desired state in Eq.~\eqref{eq:amp_enc}, whose initial state, according to our change, will only have $n+2$ qubits, as a consequence of the elimination of the first register $\ket{p}$ of the Algorithm~\ref{alg:storing} . Thus, the A-PQM storage algorithm has quantum states with structure  $$\ket{u_1u_2;m},$$ where the initial state has $u_1=0$, $u_2=1$ and the memory register $\ket{m}=\ket{m[0]m[1]...m[n-1]}$ initialized in $\ket{0}$.
 
Under these conditions, using the PQM storage algorithm~\cite{trugenberger2001probabilistic} and the controlled rotations adopted in~\cite {park2019circuit} as bases, our main proposal is a deterministic algorithm, capable of loading complex data without the cost of post-selection. The A-PQM algorithm is described in Algorithm~\ref{alg:rstoring}, and $\ket{\psi_{k_{i}}}$ denotes the quantum state in step $i$ of storing the pattern $p_k$ with the amplitude $x_k$.

\begin{algorithm}
    \SetKwFunction{load}{load}
    \SetKwInOut{Input}{input}\SetKwInOut{Output}{output}
    \Input{data = $\{x_k, p_k\}_{k=0}^{M-1}$}
    \BlankLine
    \Output{$\ket{\psi} = \sum_{k=0}^{M-1} x_k \ket{p_k}$}
    \BlankLine
    \Fn{\load($data$, $\ket{\psi}$)}
    {
    	The initial state $\ket{\psi_{k_0}} = 
    	\ket{01;0,\cdots,0}$ \\ \label{str:rinitial}
    	\ForEach{$(x_k,p_k) \in \data$}{ \label{str:rloop}
        	\label{rstep3}
    		\For{$j = 0 \to n-1$ \label{rstep4}}{
        		\uIf{$p_{k}[j] \neq 0$}
        		{
        		    $\ket{\psi_{k_1}} = CX_{u_2,m_j}\ket{\psi_{k_0}}$ \\ \label{rstep5}
        		}
        		\Else
        		{
        		    $\ket{\psi_{k_1}} = X_{m_j}\ket{\psi_{k_0}}$
        		} \label{rstep8}
    		}
    		$\ket{\psi_{k_2}} = C^{n}X_{m,u_1}\ket{\psi_{k_1}}$ \\ 
    		\label{rstep9}
    		$\ket{\psi_{k_3}} = {CU_{3}^{(x_k, \gamma_{k})}}_{u_1,u_2}\ket{\psi_{k_2}}$\\ 
    		\label{rstep10}
    		$\ket{\psi_{k_4}} = C^{n}X_{m,u_1}\ket{\psi_{k_3}}$ \\ 
    		\label{rstep11}
    		\For{$j = 0 \to n-1$\label{rstep12}}{
        		\uIf{$p_k[j] \neq 0$}
        		{
        		    $\ket{\psi_{k_5}} = CX_{u_2,m_j}\ket{\psi_{k_4}}$ \\ 
        		}
        		\Else
        		{
        		    $\ket{\psi_{k_5}} = X_{m_j}\ket{\psi_{k_4}}$\\ \label{rstep16}
        		}
    		}
    	}
    	\KwRet\ $\ket{\psi}$
	}
	\caption{A-PQM - Complex Data Storage Algorithm}
	\label{alg:rstoring}
\end{algorithm}

The quantum circuit for loading $x_k$ as a complex amplitude of $|p_k\rangle$, is depicted in Figure \ref{circuitalgproposto}.

\begin{figure*}[ht]
	\centering \includegraphics[width=0.8\textwidth]{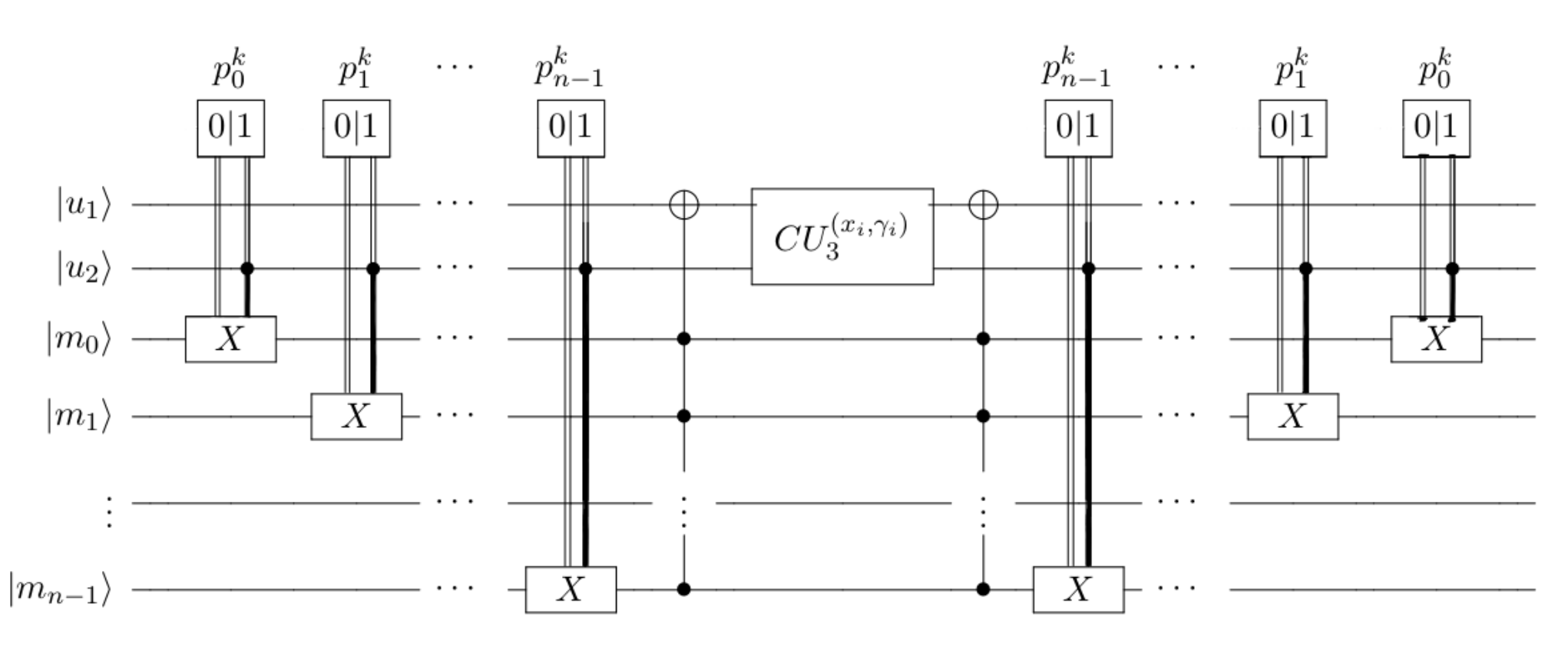}
\caption{One iteration of the A-PQM data storage algorithm}
\label{circuitalgproposto}
\end{figure*}

Algorithm \ref{alg:rstoring} works in a similar way to the Algorithm \ref{alg:storing}. That is, steps \ref{rstep4}-\ref{rstep8} correspond to the loading steps, and steps \ref{rstep12}-\ref{rstep16} correspond to the unloading steps in Algorithm \ref{alg:storing}. And given that there are $M$ complex data, each associated with a binary pattern of length $n$, $O(Mn)$ steps are required to store complex data in Algorithm \ref{alg:rstoring}. Thus, we are able to encode complex data in a quantum state in a deterministic way, thereby eliminating the post-selection cost of Algorithm~\ref{alg:ffqram}.
In the next subsection, we will make an example of the quantum data encoding procedure using Algorithm~\ref{alg:rstoring}.

\subsubsection{Example}

Let
\begin{align*}
\data=\{&(\sqrt{0,1}-i\sqrt{0,2},00); (\sqrt{0,1}-i\sqrt{0,1},01);\\
&(\sqrt{0,1},10); (\sqrt{0,4},11)\},
\end{align*}
be an input database, determined by a list of four complex numbers that we want to store.
The Algorithm \ref{alg:rstoring} will create a state $\ket{\psi}$, as described below.

\begin{equation}\label{ex:statefinal}
   \ket{\psi} =  \sum_{k=0}^{3} x_k \ket{p_k}
\end{equation} 
The initial state is given by
$$\ket{\psi_{0_0}} = \ket{01;00}.$$
For input $(\sqrt{0,1}-i\sqrt{0,2},00)$, as $p_0[0]p_1[0]=00$, by step \ref{rstep4}  we have
\begin{equation}\label{Meq8}
\ket{\psi_{0_1}}=\ket{01;11}.
\end{equation}
In step \ref{rstep9}, the operator will adjust the auxiliary qubit $u_1$ to $1$, if each qubits in memory $\ket{m[0]m[1]}$ has a value of 1, to produce
\begin{equation}\label{Meq9}
\ket{\psi_{0_2}}=\ket{11;11}.
\end{equation}
Now, in step \ref{rstep10}, we have reached the moment where our algorithm differs from the one proposed by \cite{trugenberger2001probabilistic}. Applying  $\gamma_0=1$ and $x_0=\sqrt{0,1}-i\sqrt{0,2}$ in Eq.~\eqref{eq:matrixCU3}, we get
\begin{equation*}
U_{3}^{(x_0, \gamma_{0})} = \begin{bmatrix}\label{matrixCU3x1}

 \sqrt{0,7}  & \sqrt{0,1}- i\sqrt{0,2} \\
 \\
- \sqrt{0,1}- i\sqrt{0,2}  & \sqrt{0,7} \\
\end{bmatrix}.
\end{equation*}\vspace{10px}
$\ket{\psi_{0_3}} = {CU_{3}^{(x_0, \gamma_{0})}}_{u_1,u_2}\ket{\psi_{0_2}}$ results in
\begin{equation}\label{Meq10}
\ket{\psi_{0_3}}=(\sqrt{0,1}-i\sqrt{0,2})\ket{10;11}+\sqrt{0,7}\ket{11;11}.
\end{equation}
Step \ref{rstep11} resets the qubit $u_1$ to the initial state $0$, if each qubit in memory $\ket{m[0]m[1]}$ has a value of 1. Then
\begin{equation}\label{Meq11}
\ket{\psi_{0_4}}=(\sqrt{0,1}-i\sqrt{0,2})\ket{00;11}+\sqrt{0,7}\ket{01;11}.
\end{equation}
By step \ref{rstep12}, we have $\ket{\psi_{0_4}}=X_{j}\ket{\psi_{0_4}}$. Then

\begin{equation}\label{Meq12}
\ket{\psi_{0_5}}=(\sqrt{0,1}-i\sqrt{0,2})\ket{00;00}+\sqrt{0,7}\ket{01;00}.
\end{equation}
In Algorithm \ref{alg:rstoring}, the qubit $u_2$ works in the same way as in Algorithm \ref{alg:storing}. In $\ket{\psi_{0_5}}$, $u_2=0$ indicates that the corresponding term stores the input $(x_0, p_0)$ in the memory, while $u_2=1$ indicates that the term is being processed to store a new input.
After running all iterations of the algorithm, all terms must have $u_2=0$ indicating that all patterns are stored, thus ending the algorithm.

Now, for the second input pattern, we have
\begin{equation}\label{Meq15}
\ket{\psi_{1_0}}=(\sqrt{0,1}-i\sqrt{0,2})\ket{00;00}+\sqrt{0,7}\ket{01;00}.
\end{equation}
For input $(\sqrt{0,1}-i\sqrt{0,1},01)$, as $p_1[0]=0$ the memories $m[0]$ will be changed by the operator $X$ in both terms. As $p_1[1]=1$, only the memory $m[1]$ of the second term will be changed by the operator $CX_{u_2m[1]}$, since only in the second term $u_2 = 1$. Then, by proceeding from step \ref{rstep4},

\begin{equation}\label{Meq16}
\ket{\psi_{1_1}}=(\sqrt{0,1}-i\sqrt{0,2})\ket{00;10}+\sqrt{0,7}\ket{01;11}.
\end{equation}
Following the subsequent steps of the algorithm, we obtain the following state:

\begin{equation*}
\ket{\psi_{1_2}}=(\sqrt{0,1}-i\sqrt{0,2})\ket{00;10}+\sqrt{0,7}\ket{11;11}.
\end{equation*}
In this iteration, we have $ \gamma_{1}=01-0.3 = 0.7$. Applying $x_1$ and $\gamma_1$ in Eq.~\eqref{eq:matrixCU3}, we get:
\begin{equation*}
U_{3}^{(x_1,\gamma_{1})} = \begin{bmatrix}\label{matrixCU3x2}

 \frac{\sqrt{0,5}}{\sqrt{0,7}}  & \frac{\sqrt{0,1}- i\sqrt{0,1}}{\sqrt{0,7}} \\
 \\
\frac{-\sqrt{0,1}- i\sqrt{0,1}}{\sqrt{0,7}}  & \frac{\sqrt{0,5}}{\sqrt{0,7}} \\
\end{bmatrix},
\end{equation*}
and
\begin{align*}
\ket{\psi_{1_3}}=&(\sqrt{0,1}-i\sqrt{0,2})\ket{00;10}+(\sqrt{0,1}-i\sqrt{0,1})\ket{10;11}\\&
+\sqrt{0,5}\ket{11;11},
\end{align*}

\begin{align*}
\ket{\psi_{1_4}}=&(\sqrt{0,1}-i\sqrt{0,2})\ket{00;10}+(\sqrt{0,1}-i\sqrt{0,1})\ket{00;11}\\&
+\sqrt{0,5}\ket{01;11},
\end{align*}
and
\begin{align*}
\ket{\psi_{1_5}}=&(\sqrt{0,1}-i\sqrt{0,2})\ket{00;00}+(\sqrt{0,1}-i\sqrt{0,1})\ket{00;01}\\&
+\sqrt{0,5}\ket{01;00}.
\end{align*}
At this moment, in the first two terms of the state $\ket{\psi_{1_5}}$, we have $u_2 = 0$, indicating that the amplitudes $x_0$ and $x_1$ are loaded with the states $p_0$ and $p_1$, respectively, and the third term with $ u_2 = 1 $ will be used to load the next input pattern. Therefore, we are ready to proceed to the next entry pattern.

Following the same procedure for the inputs $(\sqrt{0,1},10)$ and $(\sqrt{0,4},11)$ in the Algorithm \ref{alg:rstoring}, we obtain the final states:

\begin{align*}
\ket{\psi_{2_5}}=&(\sqrt{0,1}-i\sqrt{0,2})\ket{00;00}+(\sqrt{0,1}-i\sqrt{0,1})\ket{00;01}\\&
+\sqrt{0,1}\ket{00;10}+\sqrt{0,4}\ket{01;00}
\end{align*}
and
\begin{align*}
\ket{\psi_{3_5}}=&(\sqrt{0,1}-i\sqrt{0,2})\ket{00;00}+(\sqrt{0,1}-i\sqrt{0,1})\ket{00;01}\\&
+\sqrt{0,1}\ket{00;10}+\sqrt{0,4}\ket{00;11}.
\end{align*}

Note that in $\ket{\psi_{3_5}}$, $u_2=0$ in all terms, meaning that there is no more term in the process. Therefore, we successfully prepared the desired state shown in Eq.~\eqref{ex:statefinal} as

\begin{align}\label{estadofinalM}
\ket{\psi}=&(\sqrt{0,1}-i\sqrt{0,2})\ket{00}+(\sqrt{0,1}-i\sqrt{0,1})\ket{01}\nonumber \\ +&\sqrt{0,1}\ket{10}+\sqrt{0,4}\ket{11}.
\end{align}

\section{Experiments}
\label{sec:experiments}

We performed experiments with the FF-QRAM, FFP-QRAM using the preprocessing strategy presented in Section~\ref{sec:preprocessing}, and the adapted version of PQM (A-PQM) for continuous amplitudes. FF-QRAM and FFP-QRAM have the computational cost of $O(CMn)$, and A-PQM has the computational cost of $O(Mn)$. Through experimentation from the initialization of sparse to dense quantum states, we investigate the impact of sparsity in the computational cost of FF-QRAM and FFP-QRAM. 

In Figure~\ref{fig:ffpqramvsapqm-uniform}, we verify the probability of post-selection of FF-QRAM, FFP-QRAM, and A-PQM using $M$ 11-bit patterns, with $M = 4, 8,\ldots, 2^{11}$, and amplitudes initialized randomly following a uniform distribution. In this case, the FFP-QRAM has an improved post-selection probability, requiring only a constant number of repetitions to achieve the desired state with high probability, and has computational cost $O(Mn)$.

\begin{figure}
    \centering
    \includegraphics[width=0.5\textwidth]{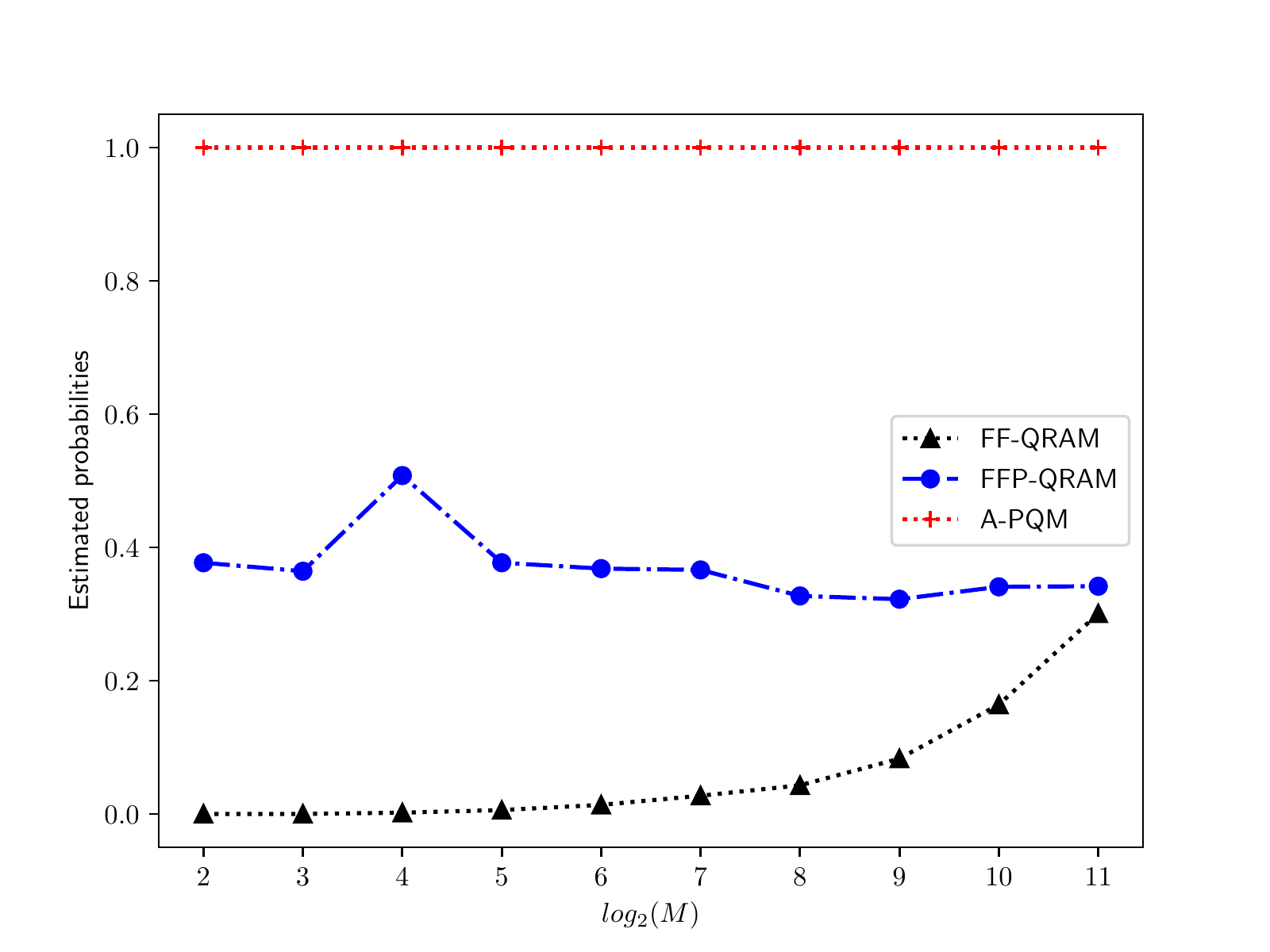}
    \caption{Success probability to load $M$ patterns into a FF-QRAM, FFP-QRAM or A-PQM with 11 qubits, where the amplitudes are initialized with a random uniform distribution.}
    \label{fig:ffpqramvsapqm-uniform}
\end{figure}

\begin{figure}
    \centering
    \includegraphics[width=0.5\textwidth]{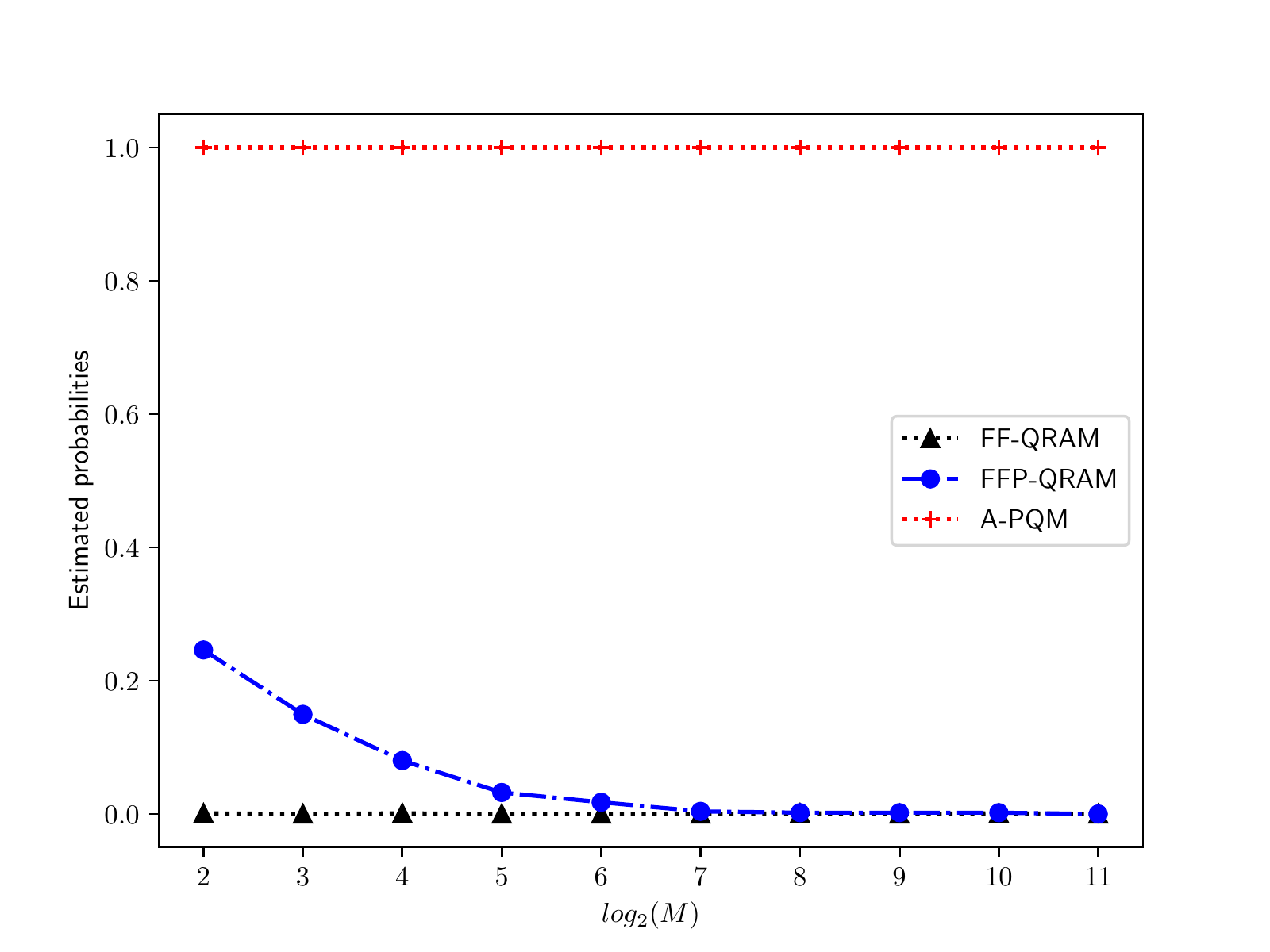}
    \caption{Success probability to load $M$ patterns into a FF-QRAM, FFP-QRAM or A-PQM with 11 qubits, where the maximum amplitude of the data is about 0.99 and the rest are chosen randomly.}
    \label{fig:ffpqramvsapqm-maxc}
\end{figure}

The post-selection probability of FFP-QRAM depends on the data distribution. In Figure~\ref{fig:ffpqramvsapqm-maxc}, we verify the probability of post-selection of FF-QRAM, FFP-QRAM, and A-PQM using $M$ 11-bit patterns, with $M = 2, 4, 8,\ldots, 2^{11}$, and with an artificial data set in which the largest amplitude is approximately 0.99 and the rest of the data initialized randomly. In this case, the preprocessing will not improve the post-selection probability of FF-QRAM and FFP-QRAM, and they will require an exponential number of repetitions to generate the desired state.

In the worst case, the post-selection success probability of FF-QRAM and FFP-QRAM approaches 0, and it is necessary to perform an exponential number of FF-QRAM or FFP-QRAM calls to prepare a quantum state. The A-PQM requires a polynomial number of operations to initialize a quantum state with $M$ patterns. In the worst case, the A-PQM state preparation is exponentially faster than the FF-QRAM and FFP-QRAM.

\section{Conclusion}
\label{sec:conclusion}
The ability to load data in a quantum state efficiently is of critical importance in quantum computing. Ref.~\cite{park2019circuit} proposed a method to load a database structured as $M$ pairs of a complex number and an $n$-bit pattern in a quantum computer with a computational cost of $O(CMn)$, where $C$ is the number of repetitions for post-selection that depends on the distribution of the data. In this work we showed that $C$ can dominate the computational cost and nullify the efficiency of the algorithm proposed in Ref.~\cite{park2019circuit}. Then we presented several strategies to circumvent this critical issue. We showed that the success probability for post-selection can be improved by combining two known algorithms together and preprocessing the data. Then we presented a new algorithm for loading the quantum database without post-selection, thereby reducing the computational cost to $O(Mn)$. The proposed method is based on the algorithms proposed in Refs.~\cite{park2019circuit} and \cite{trugenberger2001probabilistic}. We also reduced the number of qubits used in the PQM algorithm from $2n+2$ to $n+2$, which is favorable for using the proposed algorithm in noisy intermediate-scale quantum devices.

\section*{Acknowledgment}
This work was supported by CNPq (Grant No. 308730/ 2018-6), CAPES (Finance code 001), FACEPE (Grant No. BIC-1528-1.03/18), and the National Research Foundation of Korea (Grant No. 2019R1I1A1A01050161).

% \bibliographystyle{unsrt}
% \bibliography{references}

\newpage

\end{document}